\begin{document}

\title*{An exceptional symmetry algebra for the 3D Dirac--Dunkl operator}

\author{Alexis Langlois-R\'emillard and Roy Oste} 

\institute{Alexis Langlois-R\'emillard and Roy Oste \at Department of Applied Mathematics, Computer Science and Statistics, Faculty of Sciences, Ghent University, Krijgslaan 281-S9, 9000 Gent, Belgium \newline \email{Alexis.LangloisRemillard@UGent.be}; \email{Roy.Oste@UGent.be} }
\maketitle

\abstract{We initiate the study of an algebra of symmetries for the 3D Dirac--Dunkl operator associated with the Weyl group of the exceptional root system $G_2$.  
For this symmetry algebra, we give both an abstract definition and an explicit realisation. 
We then construct ladder operators, using an intermediate result we prove for the Dirac--Dunkl symmetry algebra associated with arbitrary finite reflection group acting on a three-dimensional space.}

\section{Introduction}
\label{sec:1}

In the present paper, we initiate the study of an algebra of symmetries for the Dirac--Dunkl operator associated with the exceptional root system $G_2$.  The latter is primarily known from the classification of simple Lie algebras. The associated Lie group and algebra continue to spark interest, see for instance the recent paper of Dobrev~\cite{Dobrev} and references therein. 
Our purpose is related instead to the action of the Weyl group associated with $G_2$ on a (two-dimensional subspace of a) three-dimensional space. 
Though $G_2$ is indeed a root system of rank 2, the arising symmetry algebra associated with three-dimensional space portrays interesting non-trivial relations, which are not present when considering the two-dimensional analogue. 

We will briefly recall how the symmetry algebra in question arises.
For a finite reflection group $W$ acting on a finite dimensional vector space, there exists a rational Cherednik algebra~(RCA)~\cite{Etingof} that can be viewed as a deformation of the algebra of polynomial differential operators on the vector space. An explicit realisation is given by means of differential-difference operators called Dunkl operators~\cite{Dunkl}. A generalisation of the Dirac operator is defined abstractly inside the tensor product of the RCA and a Clifford algebra, or explicitly by using Dunkl operators in lieu of partial derivatives in the ordinary definition of the Dirac operator. 

In this way, the Dirac--Dunkl operator squares to a Dunkl version of the Laplace operator whose invariance is restricted to the group $W$ as opposed to the full orthogonal invariance of its classical counterpart. Moreover, together with its dual partner, the Dirac--Dunkl operator generates a Lie superalgebra isomorphic to $\mathfrak{osp}(1|2)$. The latter's (super)centraliser inside the tensor product of RCA and Clifford algebra gives an algebra of symmetries (super)commuting with the Dirac--Dunkl operator. Structurally it can be seen as a deformation of the orthogonal Lie algebra representing total angular momentum in the non-deformed case. 

In previous work~\cite{Oste}, explicit expressions for the elements of the symmetry algebra and the generated algebraic structure were determined for arbitrary finite reflection group. Subsequently, the study was specialised to the $A_2$ root system with Coxeter group $S_3$ acting on a three-dimensional Euclidean space~\cite{Oste2}. In this case it was possible to classify all irreducible representations and give conditions for when they are unitarisable. An important tool was the construction of ladder operators. 

A natural follow-up question is whether this approach extends to settings with other reflection groups. The existence of ladder operators will in general depend on the root system under consideration. One of our aims is to work out in detail the conditions for their existence. The full analysis goes beyond the scope of this contribution; here we will already present some preliminary results pertaining to three-dimensional spaces and focus in particular on the exceptional root system $G_2$, embedded herein. 

In section~\ref{sec:2} the required definitions of the exceptional root system $G_2$ and Dirac--Dunkl operator are introduced and we present the symmetry algebra both abstractly and as an explicit realisation. 
In section~\ref{sec:3}, we prove an intermediate result for arbitrary root system in $\mathbb{R}^3$ and show that this leads to the existence of ladder operators for the symmetry algebra associated with $G_2$.

\section{An exceptional symmetry algebra}
\label{sec:2}

We consider the Euclidean space $\mathbb{R}^3$ with coordinates $x_1,x_2,x_3$. 
The 2-dimensional root system $G_2$ is realised in a plane and is generated by two simple roots $\alpha_1 =(0,1,-1)$ and $\alpha_2 = (1,-2,1)$. The Coxeter group linked to $G_2$ is the dihedral group $D_{2\cdot 6}$ that we will present by: $D_{12} = \langle \sigma_1,\sigma_2 \mid \sigma_1^2=\sigma_2^2=(\sigma_1\sigma_2)^6 = (\sigma_2\sigma_1)^6=1\rangle$ with the reflections $\sigma_1$ connected to the short root $\alpha_1$, and $\sigma_2$ to the long root $\alpha_2$. Their actions on $\mathbb R^3$ are expressed matricially by:
\begin{equation}
	\sigma_1 = \begin{pmatrix}
		1 & 0 & 0\\
		0 & 0 & 1\\
		0 & 1 & 0
	\end{pmatrix}, 
	\quad
	\sigma_2 = \begin{pmatrix}
		2/3 & 2/3 &-1/3\\
		2/3 &-1/3 & 2/3\\
		-1/3 & 2/3 & 2/3
	\end{pmatrix}.
\end{equation}

A set of positive roots is given by 
\begin{equation}
	\begin{split}
		R_+ &= \left\{\alpha_1 = (0,1,-1), \alpha_2 = (1,-2,1), \alpha_3 = (1,-1,0), \right.\\
		&\qquad {}\left.\alpha_4 =(1,1,-2) , \alpha_5 =(1,0,-1) , \alpha_6 = (2,-1,-1)\right\}.	
	\end{split}
\end{equation}
To each root $\alpha_i$, a reflection $\sigma_i$ is paired. The reflections have the following decompositions in terms of the simple reflections $\sigma_1,\sigma_2$:
\begin{align}
	\sigma_3 &= \sigma_2\sigma_1\sigma_2, & \sigma_4 &= \sigma_1\sigma_2\sigma_1, &  \sigma_5 &= \sigma_1\sigma_2\sigma_1\sigma_2\sigma_1, & \sigma_6 &= \sigma_2\sigma_1\sigma_2\sigma_1\sigma_2.
\end{align}

We introduce a $D_{12}$-invariant weight function $\kappa : G_2 \to \mathbb C$, which is defined by two complex numbers $\kappa_1$ and $\kappa_2$ linked respectively to the short and long roots. With this, it is possible to define Dunkl operators~\cite{Dunkl} for the root system $G_2$; for example the one associated with the coordinate $x_2$ is given by
\begin{equation}
	\begin{split}
		\mathcal{D}_2 &= \frac{\partial}{\partial x_2} + \kappa_1\left( \frac{1-\sigma_1}{ x_2-x_3} + \frac{1-\sigma_3}{x_1-x_2}\right)\\
		&\quad {} + \kappa_2\left( -2\frac{1-\sigma_2}{ x_1-2x_2+x_3} + \frac{1-\sigma_4}{ x_1+x_2-x_3} - \frac{1-\sigma_6}{ 2x_1-x_2-x_3}\right),
	\end{split}
\end{equation}
while $\mathcal{D}_1$ and $\mathcal{D}_3$ are defined similarly.

Next, we consider the Clifford algebra with three anticommuting generators $e_1,e_2,e_3$ that all square to $\varepsilon \in \{+1,-1\}$. The Dirac--Dunkl operator associated with our embedding of $G_2$ in $\mathbb{R}^3$ is realised explicitly by $\mathcal{D} = \mathcal{D}_1 e_1 + \mathcal{D}_2 e_2 + \mathcal{D}_3 e_3$. Together with its dual partner $x_1 e_1 + x_2 e_2 + x_3 e_3$, it generates a realisation of $\mathfrak{osp}(1|2)$. 
For ease of notation, we shall not make explicit mention of the tensor product, trusting the reader to add it whenever Clifford elements $e_i$ are involved.

The elements of the symmetry algebra were obtained in previous work~\cite{Oste} (that they indeed generate the full centraliser is the subject of~\cite{Oste4}) and we will go over them now. First, we need a double cover of the Weyl group $D_{12}$. 
The orthogonal group $O(3)$ has two non-isomorphic double covers. These correspond to the two choices of $\varepsilon$ in the definition of the Clifford algebra~\cite{Morris}. 
For either choice of $\varepsilon$, we obtain a double cover $\widetilde{D}_{12}^{\varepsilon}$ by viewing $D_{12}$ as a subgroup of the orthogonal group $O(3)$, through the pullback of the projection of the $\mathrm{Pin}^{\varepsilon}(3)$ double cover onto $O(3)$. In this way, we obtain 
the $\widetilde{D}_{12}^{\varepsilon}$ elements (together with their additive inverses):
\begin{align*}
	\widetilde{\sigma}_1 &= \frac{\sigma_1 (e_2-e_3)}{\sqrt{2}}, & \widetilde{\sigma}_3 &= \frac{\sigma_3(e_1-e_2)}{\sqrt{2}}, & \widetilde{\sigma}_5 &= \frac{\sigma_5(e_1-e_3)}{\sqrt{2}}, \\
	\widetilde{\sigma}_2 &= \frac{\sigma_2(e_1-2e_2+e_3)}{\sqrt{6}}, & \widetilde{\sigma}_4 &= \frac{\sigma_4(e_1+e_2-2e_3)}{\sqrt{6}}, & \widetilde{\sigma}_6 &= \frac{\sigma_6(2e_1-e_2-e_3)}{\sqrt{6}}.
\end{align*}

Note that the group relations depend on the choice of $\varepsilon$. By direct computation we find 
$\widetilde{D}_{12}^{\varepsilon} = \langle \widetilde{\sigma}_1, \widetilde{\sigma}_2\mid \widetilde{\sigma}_1^2 = \widetilde{\sigma}_2^2 = \varepsilon, (\widetilde{\sigma}_1\widetilde{\sigma}_2)^6 = (\widetilde{\sigma}_2\widetilde{\sigma}_1)^6 = -1\rangle$, which also follows from~\cite[Thm 4.2]{Morris}.
The order of this group is 24, and for $\varepsilon =+1$ it is again a dihedral group, while for $\varepsilon =-1$ it is a dicyclic group. 
Regardless of the choice of $\varepsilon$, all elements of $\widetilde{D}_{12}^{\varepsilon}$ will supercommute with the Dunkl-Dirac operator when taking into account the $\mathbb{Z}_2$-grading inherited from the Clifford algebra. Both $\mathcal{D}$ and $\pm\sigma_i$ are odd elements with respect to this grading, so they will in fact anticommute. 
In the following, we will use the standard notation for anticommutator $(\{-,-\})$ and commutator $([-,-])$.

Furthermore, there are three analogues of the total angular momentum operators that commute with the Dirac operator: $O_{12}, O_{23}, O_{13}$. Classically (non-Dunkl) they generate a realisation of the orthogonal Lie algebra $\mathfrak{so}(3)$, though here it will be a deformation thereof. An explicit realisation is given by
\begin{equation}\label{eq:deftwoindexsym}
	O_{ij} = L_{ij} + \varepsilon e_ie_j/2 + O_ie_j - O_je_i,
\end{equation}
where $L_{ij} = x_i\mathcal{D}_j-x_j\mathcal{D}_i$ is a Dunkl analogue of angular momentum, 
and for ease of notation we denote some specific linear combinations of elements of $\widetilde{D}_{12}^{\varepsilon}$ as follows:
\begin{equation}\label{eq:defoneindexsym}
	\begin{aligned}
		O_1 &= \kappa_1( \widetilde{\sigma}_3 + \widetilde{\sigma}_5 ) + \kappa_2(\widetilde{\sigma}_2 + \widetilde{\sigma}_4 +2\widetilde{\sigma}_6),\\
		O_2 &= \kappa_1(\widetilde{\sigma}_1  - \widetilde{\sigma}_3) + \kappa_2(-2 \widetilde{\sigma}_2+\widetilde{\sigma}_4 -\widetilde{\sigma}_6),\\
		O_3 &= \kappa_1(-\widetilde{\sigma}_1  -\widetilde{\sigma}_5) +\kappa_2(\widetilde{\sigma}_2 -2\widetilde{\sigma}_4 -\widetilde{\sigma}_6) .
	\end{aligned}
\end{equation}
It is immediate to see that the sum $O_1 + O_2 + O_3 = 0$. Moreover, we will denote $\mathcal E = \left[O_1, O_2\right]$, and by direct but slightly tedious computations, we can also see that $\left[O_2,O_3\right] =\mathcal E =-\left[O_1,O_3\right]$. 
From the realisation~\eqref{eq:deftwoindexsym}, it is clear that $O_{ij} = - O_{ji}$, and it is convenient to abide by this convention also when defining the algebra elements abstractly.

The interaction of the two simple reflections $\widetilde{\sigma}_1$ and $\widetilde{\sigma}_2$ with the two-index symmetries of equation \eqref{eq:deftwoindexsym} are given by:
\begin{equation}\label{eq:acttwoindexsym}
	\begin{aligned}
	\widetilde{\sigma}_1 O_{12} &= O_{13} \widetilde{\sigma}_1, & \widetilde{\sigma}_2O_{12} &= (-2/3 O_{12} + 2/3 O_{13} + 1/3O_{23})\widetilde{\sigma}_2,\\
	\widetilde{\sigma}_1 O_{13} &= O_{12} \widetilde{\sigma}_1, & \widetilde{\sigma}_2O_{13} &= (2/3 O_{12} + 1/3 O_{13} + 2/3O_{23})\widetilde{\sigma}_2,\\
	\widetilde{\sigma}_1 O_{23} &= -O_{23} \widetilde{\sigma}_1, & \widetilde{\sigma}_2O_{23} &= (1/3 O_{12} + 2/3 O_{13} - 2/3O_{23})\widetilde{\sigma}_2;
	\end{aligned}
\end{equation}
from which the entire action of $\widetilde{D}_{12}^{\varepsilon}$ follows. 

The final generator of our symmetry algebra is a central element $O_{123}$, of which an explicit realisation is given by
\begin{equation}\label{eq:defthreeindexsym}
	O_{123} = \varepsilon e_1e_2e_3 + O_1 e_2e_3 - O_2 e_1e_3 + O_3 e_1e_2 + L_{12}e_3 - L_{13}e_2+L_{23}e_1.
\end{equation}
As a consequence of the relations in the general case, see \cite[Thm~3.12]{Oste} or \cite[eq.~(1.7)]{Oste2}, the two-index symmetries~\eqref{eq:deftwoindexsym} respect 
\begin{equation}\label{eq:twoindexsym}
	\begin{aligned}
		\left[O_{13}, O_{12}\right] &= O_{23} + 2O_{123}O_1 + \mathcal E; \\
		[O_{23}, O_{12}] &= -O_{13} + 2O_{123}O_2 + \mathcal E; \\
		[O_{23}, O_{13}] &= O_{12} + 2O_{123}O_3 + \mathcal E. 
	\end{aligned}
\end{equation}
These relations can be proved specifically for the $G_2$ case, in a similar manner as was done for $S_3$~\cite{Oste3}. 

In the right-hand sides of \eqref{eq:twoindexsym} appear the linear combinations of elements of $\widetilde{D}_{12}^{\varepsilon}$ given by~\eqref{eq:defoneindexsym} and $\mathcal E$. When the deformation parameters $\kappa_1$, $\kappa_2$ are chosen to be zero, these all vanish and the relations~\eqref{eq:twoindexsym} reduce to those of the orthogonal Lie algebra $\mathfrak{so}(3)$.

\section{Ladder operators}
\label{sec:3}

The result we prove next holds for arbitrary root system in $\mathbb{R}^3$. Hereto, one should use the appropriate definitions for $O_1,O_2,O_3$ as given in~\cite[eq.~(3.8) and Ex. 4.2]{Oste} and the relations analogous to~\eqref{eq:twoindexsym} given by \cite[eq.~(1.7)]{Oste2}. 
What we obtain in this way are not yet the desired ladder operators, though we will show that they do lead to ladder operators for the $G_2$ case at hand.

\begin{proposition}\label{prop:commutationwithO0andal}
	Let $\omega = e^{2i\pi /3}$ and consider the following linear combinations: 
	\begin{equation}\label{eq:eleO}
		\begin{aligned}
			O_0 &= -i/\sqrt{3}(O_{12} + O_{23} - O_{13}),\\
			O_+ &= -i\sqrt{2/3}(O_{12} + \omega O_{23} - \omega^2 O_{13}),\\
			O_- &= -i\sqrt{2/ 3} (O_{12} + \omega^2 O_{23} - \omega O_{13}).
		\end{aligned}
	\end{equation}
	Denoting $\omega^{+} = \omega$ and $\omega^{-} = \omega^{2}$, they satisfy
		\begin{equation}\label{eq:commutationwithO0andal}
		\begin{aligned}
			\left[O_0, O_{\pm}\right] &= \pm O_{\pm} \mp i\sqrt{2/3}  (2O_{123}(O_3 + \omega^{\pm} O_1 + \omega^{\mp}O_2) \\ 
			& \quad {} + \left[ O_1,O_2\right] + \omega^{\pm}\left[ O_2,O_3\right] + \omega^{\mp}\left[ O_3,O_1\right]); \\
			\left[O_+,O_-\right] &= 2 O_0  - 2i/\sqrt{3} (2O_{123}(O_1+O_2+O_3) \\ 
			& \quad {} + \left[O_1,O_2\right] +\left[ O_2,O_3\right] + \left[ O_3,O_1\right]).
		\end{aligned}
	\end{equation}
\end{proposition}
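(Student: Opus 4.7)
The plan is by direct computation. Substituting the definitions~\eqref{eq:eleO} into each left-hand side and expanding by bilinearity reduces the problem to a linear combination of the basic brackets $[O_{ij},O_{kl}]$ with scalar coefficients built from $1,\omega,\omega^{2}$. I then replace each such bracket by its value as given by the $\mathbb{R}^{3}$-analogue of~\eqref{eq:twoindexsym} from~\cite[eq.~(1.7)]{Oste2}. Grouping the resulting terms by type --- an $O_{mn}$-piece, a $2O_{123}O_{k}$-piece, and an $[O_{i},O_{j}]$-piece --- leaves three partial sums whose coefficients are cyclotomic expressions in $\omega$ that I simplify via $\omega^{3}=1$, $1+\omega+\omega^{2}=0$, and $\omega^{2}-\omega = -i\sqrt{3}$.

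For $[O_{+},O_{-}]$, the $O_{mn}$-piece of the expansion produces a multiple of $O_{12}+O_{23}-O_{13}$; combined with the overall prefactor $(-i\sqrt{2/3})^{2}=-2/3$ and the identity $\omega^{2}-\omega=-i\sqrt{3}$, it evaluates to exactly $2O_{0}$. The residual $O_{123}O_{k}$ and $[O_{i},O_{j}]$ contributions inherit identical symmetric prefactors and assemble into the correction $-\tfrac{2i}{\sqrt{3}}\bigl(2O_{123}(O_{1}+O_{2}+O_{3})+[O_{1},O_{2}]+[O_{2},O_{3}]+[O_{3},O_{1}]\bigr)$. For $[O_{0},O_{\pm}]$, the analogous recombination shows --- using $1+\omega+\omega^{2}=0$ --- that the $O_{mn}$-piece reassembles precisely into $\pm O_{\pm}$, while the remaining terms align into the asymmetric cyclotomic pattern $2O_{123}(O_{3}+\omega^{\pm}O_{1}+\omega^{\mp}O_{2})+[O_{1},O_{2}]+\omega^{\pm}[O_{2},O_{3}]+\omega^{\mp}[O_{3},O_{1}]$ with overall prefactor $\mp i\sqrt{2/3}$.

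The main obstacle is purely organisational: keeping the $\omega$-powers straight through the six-term bilinear expansions and the signs coming from the antisymmetry of the commutator. Conceptually, the basis~\eqref{eq:eleO} is the discrete Fourier transform of $(O_{12},O_{23},-O_{13})$ over $\mathbb{Z}/3\mathbb{Z}$, which diagonalises the cyclic action implicit in~\eqref{eq:twoindexsym}; this is why the $O_{mn}$-piece collapses to $\pm O_{\pm}$ or $2O_{0}$, and why the remaining terms are forced into the stated cyclotomic shape. A practical route is to abbreviate each right-hand side of~\eqref{eq:twoindexsym} by a symbol, perform the symbolic reduction in full generality, and substitute the $\omega$-identities only at the very end.
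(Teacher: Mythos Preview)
Your proposal is correct and follows essentially the same route as the paper: expand the commutators bilinearly in the definitions~\eqref{eq:eleO}, substitute the $\mathbb{R}^3$-relations from \cite[eq.~(1.7)]{Oste2}, and simplify the resulting coefficients via the cyclotomic identities $1+\omega+\omega^{2}=0$ and $\omega-\omega^{2}=i\sqrt{3}$. Your added remark interpreting~\eqref{eq:eleO} as a discrete Fourier transform over $\mathbb{Z}/3\mathbb{Z}$ is a helpful organisational insight but not part of the paper's argument.
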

\begin{proof}	
	Using the definitions~\eqref{eq:eleO} and 
grouping the terms appropriately we obtain
	\begin{align*}
		\left[O_0, O_{\pm}\right] &= -\sqrt{2}/ 3 \left( (1-\omega^{\pm}) \left[ O_{23},O_{12}\right]\right.\\
		&\quad {} + \left.(\omega^{\mp}-1)\left[ O_{12},O_{31}\right] + (\omega^{\pm} - \omega^{\mp}) \left[ O_{31},O_{23}\right]\right).
	\intertext{Noticing that $(\omega^{\pm} - \omega^{\mp}) = \pm i\sqrt{3}$, and $(1-\omega^{\pm}) = 3/2 \mp i\sqrt{3}/2 = \pm i\sqrt{3} \omega^{\mp}$, and $(\omega^{\mp} -1) = -3/2 \mp i\sqrt{3}/2 = \pm i\sqrt{3}\omega^{\pm}$, and applying~\cite[eq.~(1.7)]{Oste2} results in}
		&= \mp i\sqrt{2}/\sqrt{3} \big(  \omega^{\mp} (O_{31} + \{ O_{123},O_2\} + \left[O_3, O_1\right])\\
		&\qquad \qquad {} + \omega^{\pm} (O_{23} + \{ O_{123},O_1\} + \left[O_2, O_3\right])\\
		&\qquad \qquad {} +   O_{12} + \{ O_{123},O_3\} + \left[O_1, O_2\right]\big),
		\end{align*}	
		and finally using again the definition~\eqref{eq:eleO} one arrives at the desired expression.

	In the same manner for the second equation, we find
	\begin{align*}
		\left[ O_+, O_-\right] &= -2/ 3 (\omega - \omega^2)\left(\left[O_{23},O_{12}\right] + \left[ O_{12},O_{31}\right] + \left[O_{31},O_{23}\right]\right)\\
		&= -2i/\sqrt{3} \left( O_{31} + \{ O_{123},O_2\} + \left[ O_3,O_1\right]\right. + O_{23} + \{O_{123},O_1\} + \left[ O_{2},O_1\right]\\
		&\qquad \qquad \quad {} + \left.O_{12} + \{ O_{123},O_1 \} + \left[ O_1,O_2\right]\right)\\
		&= 2O_0 - 2i/\sqrt{3} \left(\{O_{123},O_1+O_2+O_3\} + \left[O_1,O_2\right] +\left[ O_2,O_3\right] + \left[ O_3,O_1\right]\right).
	\end{align*}
	As $O_{123}$ is central, this proves the second equality.\smartqed\qed
\end{proof}

When the root system satisfies some specific properties, we can use the previous result to obtain ladder operators.

\begin{proposition}\label{prop:ladderopg2}
	For the root system $G_2$, the elements $O_{0}$, $O_+$ and $O_-$ satisfy
	\begin{equation}\label{eq:formofcommutO0andalG2}
		\begin{aligned}
			\left[O_0, O_{\pm}\right] &= \pm O_{\pm} \mp 2i\sqrt{2/3} \,O_{123} \left( O_3 + \omega^{\pm} O_1 + \omega^{\mp}O_2\right); \\
			\left[O_+,O_-\right] &= 2 O_0  - 2i\sqrt{3} \mathcal E .
		\end{aligned}
	\end{equation}
	Moreover, the quadratic elements $K_{\pm} = 1/2\{O_0,O_{\pm}\} $ fulfill the ladder operator relations $[O_0, K_{\pm}] = \pm K_{\pm}$.
\end{proposition}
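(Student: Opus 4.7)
The proof naturally splits into two parts.

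For the commutation relations~\eqref{eq:formofcommutO0andalG2}, I would specialise Proposition~\ref{prop:commutationwithO0andal} to $G_2$ using the identities recorded in Section~\ref{sec:2}: $O_1+O_2+O_3=0$ together with $[O_1,O_2]=[O_2,O_3]=-[O_1,O_3]=\mathcal{E}$. In the formula for $[O_0,O_\pm]$, the bracketed combination $[O_1,O_2]+\omega^{\pm}[O_2,O_3]+\omega^{\mp}[O_3,O_1]$ collapses to $\mathcal{E}(1+\omega^{\pm}+\omega^{\mp})=0$ via $1+\omega+\omega^{2}=0$, eliminating the commutator contribution. For $[O_+,O_-]$, the factor $O_1+O_2+O_3$ annihilates the $O_{123}$-linear term, while $[O_1,O_2]+[O_2,O_3]+[O_3,O_1]=3\mathcal{E}$ produces the residual $-2i\sqrt{3}\,\mathcal{E}$.

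For the ladder operator property, the starting identity is $[O_0,K_\pm]=\tfrac{1}{2}[O_0,\,O_0 O_\pm+O_\pm O_0]=\tfrac{1}{2}\{O_0,[O_0,O_\pm]\}$, obtained by expanding the commutator of $O_0$ with the symmetrised product. Substituting the simplified $[O_0,O_\pm]$ and using the centrality of $O_{123}$ gives $[O_0,K_\pm]=\pm K_\pm \mp i\sqrt{2/3}\,O_{123}\{O_0,\,O_3+\omega^{\pm}O_1+\omega^{\mp}O_2\}$, so the ladder relation is equivalent to the vanishing of the residual anticommutator for both signs. Using $O_1+O_2+O_3=0$, one may rewrite the inner expression as $(\omega^{\pm}-1)O_1+(\omega^{\mp}-1)O_2$; the invertibility of the $2\times 2$ matrix with rows $((\omega-1),(\omega^{2}-1))$ and $((\omega^{2}-1),(\omega-1))$ then shows that the two sign-choices of the ladder relation are jointly equivalent to the single identity $O_{123}\{O_0,O_j\}=0$ for $j=1,2$.

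The main obstacle is thus to establish this residual vanishing $O_{123}\{O_0,O_j\}=0$. My plan is to expand $O_0=-i/\sqrt{3}(O_{12}+O_{23}-O_{13})$ via~\eqref{eq:deftwoindexsym} and each $O_j$ via~\eqref{eq:defoneindexsym} as linear combinations of the $\widetilde{\sigma}_k$, then push the Clifford-valued reflections through the $L_{ij}$ and Clifford pieces of the $O_{ij}$ using~\eqref{eq:acttwoindexsym} (extended to all of $\widetilde{D}_{12}^{\varepsilon}$). The specific rational coefficients in~\eqref{eq:defoneindexsym} characteristic of $G_2$ should then force every contribution to cancel. Whether this laborious verification admits a shorter, more structural proof, for instance by exploiting the cyclic $C_3\subset D_{12}$ action on the short-root subsystem, is an avenue worth pursuing in parallel.
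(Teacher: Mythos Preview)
Your derivation of~\eqref{eq:formofcommutO0andalG2} and your reduction of the ladder identity to the vanishing of $\{O_0,O_j\}$ match the paper's proof exactly.

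Where you diverge is in the final verification. You propose to expand $O_0$ through the explicit realisation~\eqref{eq:deftwoindexsym} and push each $\widetilde{\sigma}_k$ through the $L_{ij}$ and Clifford pieces. This would eventually work, but the paper bypasses it entirely with a one-line observation you have overlooked: the combination $O_{12}+O_{23}-O_{13}$ is sent to its negative by each simple reflection. Indeed, summing the rows of~\eqref{eq:acttwoindexsym} gives $\widetilde{\sigma}_1(O_{12}+O_{23}-O_{13}) = -(O_{12}+O_{23}-O_{13})\widetilde{\sigma}_1$ and likewise for $\widetilde{\sigma}_2$; hence every odd-length word in the generators---in particular each reflection $\widetilde{\sigma}_k$---anticommutes with $O_0$. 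Since the $O_j$ are linear combinations of the $\widetilde{\sigma}_k$, one gets $\{O_0,O_j\}=0$ outright, with no need to touch the Clifford realisation. This is precisely the ``shorter, more structural proof'' you were hoping for: $O_0$ transforms under $D_{12}$ by the sign character, so reflections flip it.
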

\begin{proof}	\smartqed
	Starting from the relations~\eqref{eq:commutationwithO0andal}, we can use $1 + \omega + \omega^2 = 0$, and  $O_1 + O_2 + O_3 = 0$, while $\left[O_1, O_2\right]=\left[O_2,O_3\right] =\left[O_3,O_1\right]=\mathcal E $, to arrive at~\eqref{eq:formofcommutO0andalG2}.

In addition, we have
$
		\left[ O_0,K_{\pm}\right] = 1/2\left[ O_0,\left\{ O_0,O_{\pm}\right\}\right] = 1/2\left\{ O_0,\left[ O_0,O_{\pm}\right]\right\}
$.
By the first relation~\eqref{eq:formofcommutO0andalG2}, this becomes
\begin{equation*}
		\left[ O_0,K_{\pm}\right] =\pm 1/2\left\{ O_0,O_{\pm}\right\} \mp i\sqrt{2/3}\left\{O_{0}, O_{123}(O_3 + \omega^{\pm} O_1 + \omega^{\mp}O_{2})\right\} = \pm K_{\pm}.
	\end{equation*}
In the last step we used the fact that $O_{123}$ is central, and that all elements of $\widetilde{D}_{12}^{\varepsilon}$ anticommute with $O_0$, which is clear from the action~\eqref{eq:acttwoindexsym}.\smartqed\qed
\end{proof}

These ladder operators can now be used in the study of the representation theory of the symmetry algebra in a similar vein as was done in the $S_3$ case~\cite{Oste2}, which we aim to do in future work.  In addition, we will investigate the construction of ladder operators for other reflection groups.

\begin{acknowledgement}
We wish to thank Hendrik De Bie and Joris Van der Jeugt for helpful discussions and support.
This research was supported in part by EOS Research Project number 30889451.
ALR also holds a scholarship from the Fonds de recherche du Qu\'ebec -- Nature et technologies number 270527. RO was supported by the Joint Research Project KP-06-N28/6 of the Bulgarian National Science Fund and by a postdoctoral fellowship, fundamental research, of the Research Foundation -- Flanders (FWO), number 12Z9920N. 
This support is gratefully acknowledged. 
\end{acknowledgement}

\end{document}